\documentclass[11pt]{article}

\usepackage[margin=1in]{geometry}
\usepackage{amsmath,amssymb,amsthm,mathtools}
\usepackage{booktabs}
\usepackage{enumitem}
\usepackage[numbers,sort&compress]{natbib}
\usepackage[colorlinks=true,linkcolor=blue,citecolor=blue,urlcolor=blue]{hyperref}
\usepackage{microtype}

\theoremstyle{plain}
\newtheorem{theorem}{Theorem}

\newtheorem{corollary}[theorem]{Corollary}

\theoremstyle{definition}
\newtheorem{definition}[theorem]{Definition}
\newtheorem{assumption}[theorem]{Assumption}

\theoremstyle{remark}
\newtheorem{remark}[theorem]{Remark}

\newcommand{\R}{\mathbb{R}}
\newcommand{\E}{\mathbb{E}}
\newcommand{\Pbb}{\mathbb{P}}
\newcommand{\F}{\mathcal{F}}
\newcommand{\Hh}{\mathcal{H}}
\newcommand{\A}{\mathcal{A}}
\newcommand{\Bset}{\mathcal{B}}
\newcommand{\M}{\mathcal{M}}

\newcommand{\doop}{\operatorname{do}}
\newcommand{\ES}{\operatorname{ES}}

\title{\vspace{-1.25cm}
Foundations of a Time-Consistent Counterfactual\\
Actuarial Runtime for Autonomous AI Agents\\
\large Foundational working paper}
\author{Hao-Hsuan Chen\\
\small Department of Risk Management and Insurance\\
\small National Chengchi University\\
\small \texttt{dannychen9199@gmail.com}}
\date{May 2026}

\begin{document}
\maketitle

\begin{abstract}
We propose a foundational runtime actuarial layer for autonomous AI agents in
which every side-effect-bearing action carries a time-consistent,
counterfactual risk toll computed against a contractually fixed safe default,
inside an explicit underwriting boundary. The framework treats per-action
insurance as the primary unit of analysis and replaces post-hoc annual
liability cover with a pre-action transaction layer. The paper establishes
four structural results: (i) a well-defined counterfactual toll under a
chosen safe-default mapping and continuation policy, with explicit
non-uniqueness; (ii) a no-splitting property within an underwriting boundary
that telescopes path-decomposed actions into a boundary potential, with a
corollary tying gaming-resistance to boundary design; (iii) an
irreversible-authority premium, split into a strictly positive action-level
component and an if-and-only-if characterisation of the set-level robust
capital increase; and (iv) a conservative runtime gating theorem that
translates high-probability toll envelopes into an executed-action budget
guarantee. The result is the mathematical base layer for a broader program:
an empirical companion instantiates the runtime through an Actuarial Action
Interface and authority-frontier experiments; a mechanism-design companion
studies strategic operator incentives and cross-boundary aggregation; and a
dynamic-underwriting companion studies experience rating and audit-replay
calibration. The present paper states the primitive contract, the toll
identity, the within-boundary no-arbitrage result, and the budget guarantee
on which those later layers depend.
\end{abstract}

\section{Positioning}\label{sec:positioning}

The research question is:
\begin{quote}
\itshape How can insurance, warranty, or escrow-backed guarantees be embedded
directly into an autonomous AI agent runtime, so that each economically
meaningful action is priced, budgeted, and gated before execution?
\end{quote}

The intended contribution is not another audit score for AI systems. The
object is a \emph{transaction layer}: every side-effect-bearing action pays
a risk toll, or it is blocked, downgraded to a safe default, or escalated to
human approval.

This paper deliberately avoids three common traps.
\begin{enumerate}[leftmargin=1.5em,itemsep=2pt]
  \item \textbf{Static tail-risk allocation is not new.} Euler allocation,
  marginal expected shortfall, and expected-shortfall capital attribution are
  standard tools \citep{tasche2007euler,acharya2017measuring,mcneil2015qrm}.
  The novelty here is the \emph{dynamic, action-conditioned, runtime} object,
  not the use of a coherent risk measure.
  \item \textbf{Conditional CVaR is not automatically time-consistent.}
  Runtime decisions require a dynamic risk measure compatible with Bellman's
  principle, not a naive sequence of one-period CVaR calculations
  \citep{roorda2007time,bion2008dynamic,ruszczynski2010risk}.
  \item \textbf{Agent actions are not tokens.} The unit of pricing is a
  side-effect-bearing tool call with external economic, legal, or operational
  consequences, not internal chain-of-thought, hidden deliberation, or token
  generation.
\end{enumerate}

\paragraph{Role in the research program.}
This paper is the foundation layer. It defines the priced object
(a side-effect-bearing action), the counterfactual comparison (execute the
action versus execute the safe default), the aggregation unit (an
underwriting boundary), and the pathwise budget guarantee. Later layers use
these objects without changing them: the empirical runtime layer
\citep{chen2026insuring} implements the gate and measures authority release;
the mechanism-design layer asks
which contract clauses make strategic manipulation unprofitable; and the
dynamic-underwriting layer asks how premiums and reserves should update as
experience accumulates. This separation is deliberate: Paper~A proves the
primitive actuarial accounting identities, while companion work tests,
extends, and strategically hardens them.

\section{Related Work}\label{sec:related}

\paragraph{Algorithmic and AI insurance.}
\citet{bertsimas2021algorithmic} price liability exposure for machine-driven
classification decisions and connect model properties to insurance value;
\citet{bertsimas2024catastrophe} extends adaptive robust optimisation to
catastrophe pricing. Recent work on trustworthy AI agents proposes
transaction-level compensation and underwriting-inspired standards for
agent-mediated transactions \citep{hua2026quantifying}. This paper differs
by making the priced object an individual side-effect action inside a
runtime control loop, rather than an annual policy or product warranty.

\paragraph{Dynamic risk measures.}
Time consistency is central. Conditional CVaR and VaR can fail dynamic
consistency in multistage settings \citep{roorda2007time,kang2006time}.
Dynamic coherent or convex risk measures restore a version of Bellman's
principle through recursive conditional risk mappings
\citep{bion2008dynamic,cheridito2006dynamic,detlefsen2005conditional,
ruszczynski2010risk,shapiro2009lectures}. Risk-sensitive Markov decision
processes provide the algorithmic counterpart
\citep{chow2014algorithms,tamar2015policy}.

\paragraph{Causal credit assignment.}
The proposed toll is counterfactual: it compares the loss distribution
under executing an action with the loss distribution under a safe default
using an interventional model \citep{pearl2009causality}. In deployed
systems, the interventional distribution can be approximated by simulator
rollouts, causal world models, or off-policy estimators.

\paragraph{Runtime guarantees.}
The implementation goal is not exact real-time tail-risk computation for
every action. It is conservative gating: fast upper bounds for routine
actions, and deeper simulation or human escalation for high-stakes actions.
Conformal prediction can provide finite-sample high-probability upper
envelopes under specified calibration assumptions
\citep{vovk2005algorithmic,angelopoulos2021gentle}; online adaptive
selection requires adaptive variants \citep{gibbs2021adaptive}.

\section{Model}\label{sec:model}

\subsection{Histories and side-effect actions}

Fix a finite horizon $T$. Let $(\Omega, \F, (\F_t)_{t=0}^T, \Pbb)$ be a
filtered probability space with $\F_0$ trivial. At time $t$ the agent has
history $h_t \in \Hh_t$ represented by $\F_t$, and selects
$a_t \in \A_t(h_t)$.

\begin{definition}[Side-effect-bearing action]\label{def:side-effect}
Decompose the environment state at time $t$ as
$s_t = (s_t^{\mathrm{int}}, s_t^{\mathrm{ext}})$, where $s_t^{\mathrm{ext}}$
collects contractually designated external state components such as
financial ledgers, persistent databases, third-party communications,
legal commitments, security permissions, and production systems. An
action $a \in \A_t(h_t)$ is \emph{side-effect-bearing} relative to an
underwriting boundary if the interventional transition kernel of
$s_{t+1}^{\mathrm{ext}}$ under $\doop(a)$ differs from the kernel under
the null (or contractually identified no-op) action on at least one
designated external component, and the induced change is not freely and
instantaneously reversible by the agent alone within contractually
specified time and cost thresholds. Examples: payment, refund, order
placement, database write, email send, form submission, code deployment,
trade execution, claim adjudication. Internal reasoning, hidden
deliberation, and token generation are not priced actions.
\end{definition}

\begin{definition}[Minimal-authority safe default]\label{def:safe-default}
For an intended action $a \in \A_t(h_t)$, the safe default $a^0_t(h_t)$ is
the ex ante fixed, least externally committing action in $\A_t(h_t)$ that
preserves the same intent while minimising external authority. The mapping
$a \mapsto a^0_t(h_t)$ is part of the contract and \emph{cannot be
selected by the operator after observing the toll}; this prevents the
adversary from picking the safe default that minimises its own
counterfactual gap. Examples: ``send email'' $\mapsto$ ``save draft'';
``execute trade'' $\mapsto$ ``paper trade''; ``pay invoice'' $\mapsto$
``draft payment for approval''; ``drop table'' $\mapsto$ ``log proposed
change without execution''.
\end{definition}

\subsection{Counterfactual losses}

Let $L \in L^p_+(\F_T)$, $p \in [1, \infty]$, be terminal loss. For an
intervention $\doop(a)$ at history $h_t$, let $L^{\doop(a), \pi^+}$ denote
terminal loss when action $a$ is forced at time $t$ and a fixed continuation
policy $\pi^+$ is followed thereafter. We make the continuation policy
explicit in notation throughout. The interventional distribution may be
supplied by:
\begin{enumerate}[leftmargin=1.5em,itemsep=2pt]
  \item a sandbox simulator (reference implementation in companion Paper~B);
  \item a causal world model $s_{t+1} = f(s_t, a_t, \xi_t)$;
  \item an off-policy estimator from logged trajectories.
\end{enumerate}

\begin{assumption}[Interventional well-posedness]\label{ass:wellposed}
For each priced action $a \in \A_t(h_t)$ and safe default $a^0_t(h_t)$, the
interventional terminal losses $L^{\doop(a), \pi^+}$ and
$L^{\doop(a^0_t(h_t)), \pi^+}$ are well-defined elements of $L^p_+(\F_T)$
under an admissible family of environment models $\M$.
\end{assumption}

\subsection{Time-consistent dynamic risk}

Let $\sigma_t: L^p(\F_{t+1}) \to L^p(\F_t)$ be a one-step conditional risk
mapping. The dynamic risk process is defined recursively:
\begin{equation}\label{eq:recursive-risk}
\rho_T(X) = X, \qquad
\rho_t(X) = \sigma_t\bigl(\rho_{t+1}(X)\bigr), \quad t = T-1, \ldots, 0.
\end{equation}

\begin{assumption}[Convex, time-consistent runtime risk]\label{ass:rho}
Each one-step mapping $\sigma_t$ satisfies, for all $X, Y \in L^p(\F_{t+1})$
and $\F_t$-measurable bounded $m$:
\begin{enumerate}[label=(\roman*),leftmargin=1.5em,itemsep=2pt]
  \item (normalisation) $\sigma_t(0) = 0$;
  \item (monotonicity) $X \le Y \Rightarrow \sigma_t(X) \le \sigma_t(Y)$;
  \item (locality) $\sigma_t(\mathbf{1}_A X) = \mathbf{1}_A \sigma_t(X)$ for
  $A \in \F_t$;
  \item (translation invariance) $\sigma_t(X + m) = \sigma_t(X) + m$;
  \item (convexity) $\sigma_t(\lambda X + (1-\lambda) Y) \le \lambda \sigma_t(X)
  + (1-\lambda)\sigma_t(Y)$, $\lambda \in [0,1]$.
\end{enumerate}
\emph{Coherent specialisations} additionally impose positive homogeneity
$\sigma_t(\lambda X) = \lambda \sigma_t(X)$ for $\lambda \ge 0$
\citep{artzner1999coherent}. The entropic special case
$\sigma_t(Y) = \gamma^{-1}\log\E[\exp(\gamma Y)\mid \F_t]$, $\gamma > 0$,
satisfies (i)--(v) and yields a time-consistent recursive composition,
but is \emph{not coherent} because positive homogeneity fails in general.
We use the entropic family as the reference runtime mapping throughout;
results below require only the convex case unless explicitly stated.
\end{assumption}

\subsection{Underwriting boundary}

\begin{definition}[Underwriting boundary]\label{def:boundary}
An underwriting boundary $B \in \Bset$ is a contractually defined unit of
aggregation, e.g.
\[
B = (\text{legal entity},\; \text{rolling time window},\;
     \text{action category},\; \text{currency or asset class}).
\]
Let $E^B_t \in \R^d_+$ be the observable cumulative exposure state inside
$B$ at time $t$, and let $\Delta E^B_t(a_t) \in \R^d_+$ be the exposure
increment contributed by action $a_t$.
\end{definition}

\begin{assumption}[Boundary observability]\label{ass:boundary-obs}
For each boundary $B$, the cumulative exposure $E^B_t$ and increment
$\Delta E^B_t(a_t)$ are observable, contractually attributable, and updated
before the next action priced in $B$ is evaluated.
\end{assumption}

\section{Counterfactual Action Toll}\label{sec:toll}

\begin{definition}[Counterfactual action toll]\label{def:toll}
Given history $h_t$, continuation policy $\pi^+$, dynamic risk process
$\rho_t$, and safe-default mapping $a^0$, the action-level counterfactual
risk toll is
\begin{equation}\label{eq:toll}
c_t(a \mid h_t; \pi^+, \rho, a^0)
\;=\;
\rho_t\!\bigl(L^{\doop(a), \pi^+}\bigr)
\;-\;
\rho_t\!\bigl(L^{\doop(a^0_t(h_t)), \pi^+}\bigr).
\end{equation}
We write $c_t^+(a) := \max\{c_t(a), 0\}$ for the runtime charge under the
convention that risk-reducing actions are not subsidised.
\end{definition}

\begin{theorem}[Time-consistent counterfactual tolls]\label{thm:tc-tolls}
Under Assumptions~\ref{ass:wellposed}--\ref{ass:rho} and the recursion
\eqref{eq:recursive-risk}, the dynamic risk process is time-consistent:
for every $0 \le t < s \le T$, if $\rho_s(X) \le \rho_s(Y)$ a.s.\ then
$\rho_t(X) \le \rho_t(Y)$ a.s. Moreover, for every well-posed action
$a \in \A_t(h_t)$, the toll $c_t(a \mid h_t; \pi^+)$ in \eqref{eq:toll}
is $\F_t$-measurable and bounded. If
$L^{\doop(a), \pi^+} \le L^{\doop(b), \pi^+}$ a.s.\ for two actions $a, b$,
then $\rho_t(L^{\doop(a), \pi^+}) \le \rho_t(L^{\doop(b), \pi^+})$.
\end{theorem}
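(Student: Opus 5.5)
The plan is to derive all three claims from the recursive definition \eqref{eq:recursive-risk} and the axioms in Assumption~\ref{ass:rho}. Monotonicity (ii) and translation invariance (iv) carry almost all of the argument.

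\textbf{Time consistency.} Unrolling the recursion gives, for $t<s$,
\[
\rho_t(X)=\sigma_t\circ\sigma_{t+1}\circ\cdots\circ\sigma_{s-1}\bigl(\rho_s(X)\bigr),
\]
which I will verify by backward induction on $t$ starting from $t=s-1$. Each $\sigma_u$ is monotone, so a composition of monotone maps is monotone. Hence $\rho_s(X)\le\rho_s(Y)$ a.s.\ implies $\rho_t(X)\le\rho_t(Y)$ a.s. To make this rigorous I will induct downward from $u=s-1$ to $u=t$: if $\rho_{u+1}(X)\le\rho_{u+1}(Y)$, then applying $\sigma_u$ and using (ii) gives $\rho_u(X)\le\rho_u(Y)$. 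The only care needed is that each $\rho_{u+1}(X)$ lies in the domain $L^p(\F_{u+1})$ of $\sigma_u$. This holds because $\sigma_{u+1}$ maps into $L^p(\F_{u+1})$ and $\rho_T(X)=X\in L^p(\F_T)$.

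\textbf{Measurability.} By Assumption~\ref{ass:wellposed}, both $L^{\doop(a),\pi^+}$ and $L^{\doop(a^0_t(h_t)),\pi^+}$ lie in $L^p_+(\F_T)$. Then $\rho_t$ of each is an element of $L^p(\F_t)$, since $\sigma_t$ has codomain $L^p(\F_t)$. A difference of $\F_t$-measurable random variables is $\F_t$-measurable, so $c_t$ and $c_t^+$ are $\F_t$-measurable.

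\textbf{Monotone comparison across actions.} The final claim is the special case $s=T$ of time consistency. Since $\rho_T$ is the identity, $L^{\doop(a),\pi^+}\le L^{\doop(b),\pi^+}$ means exactly $\rho_T(L^{\doop(a),\pi^+})\le\rho_T(L^{\doop(b),\pi^+})$. The first part then gives the inequality at time $t$.

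\textbf{Boundedness (main obstacle).} Boundedness is where I expect the difficulty, because it is false in general for $p<\infty$. For example, the entropic $\rho_t$ of an unbounded $L^p$ loss can be infinite or unbounded. So I would either read the statement with $p=\infty$ or add the assumption that the interventional losses are essentially bounded; I would try the $L^\infty$ route first. The sandwich argument is as follows. Let $0\le X\le \|X\|_\infty=:M$. Then monotonicity, normalisation (i) and translation invariance (iv) give
\[
0=\sigma_u(0)\le\sigma_u(X)\le\sigma_u(0+M)=M.
\]
Propagating this bound through the composition gives $0\le\rho_t(X)\le M$. Therefore
\[
|c_t(a\mid h_t;\pi^+)|\le\max\bigl\{\|L^{\doop(a),\pi^+}\|_\infty,\;\|L^{\doop(a^0_t(h_t)),\pi^+}\|_\infty\bigr\},
\]
and likewise $0\le c_t^+\le\|L^{\doop(a),\pi^+}\|_\infty$. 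Locality (iii) and convexity (v) are not needed for this theorem.
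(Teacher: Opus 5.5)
Your proposal is correct and follows essentially the same route as the paper. Time consistency comes from downward induction on the recursion using monotonicity of each $\sigma_u$, measurability from the codomain $L^p(\F_t)$, and the dominance claim as the case $s=T$ with $\rho_T=\mathrm{id}$. Boundedness is likewise established only for $p=\infty$, via normalisation, monotonicity and translation invariance. Because you use nonnegativity of the losses, you obtain the slightly sharper bound $\max\{\|L^{\doop(a),\pi^+}\|_\infty,\|L^{\doop(a^0_t(h_t)),\pi^+}\|_\infty\}$ in place of the paper's sum of the two norms.
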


\begin{proof}[Proof sketch]
By backward induction. Each $\sigma_s$ maps $L^p(\F_{s+1}) \to L^p(\F_s)$,
so starting from $\rho_T(X) = X \in L^p(\F_T)$ and applying $\sigma_{T-1},
\ldots, \sigma_t$, we obtain $\rho_t(X) \in L^p(\F_t)$; in particular,
$\rho_t(X)$ is $\F_t$-measurable. For dynamic consistency, suppose
$\rho_{s}(X) \le \rho_{s}(Y)$ a.s.; then monotonicity of $\sigma_{s-1}$
(Assumption~\ref{ass:rho}(ii)) gives
$\rho_{s-1}(X) = \sigma_{s-1}(\rho_s(X)) \le \sigma_{s-1}(\rho_s(Y))
 = \rho_{s-1}(Y)$ a.s. Iterating from $s$ down to $t$ yields the claim.
The toll $c_t$ is the $\F_t$-measurable difference of two $L^p(\F_t)$
random variables and is bounded by translation invariance:
$|c_t(a)| \le \|L^{\doop(a), \pi^+}\|_\infty +
\|L^{\doop(a^0), \pi^+}\|_\infty$ when $p = \infty$.
The dominance claim follows from dynamic consistency with $s = T$ and
$\rho_T(X) = X$. No uniqueness of the toll is used; it is well-defined
only after fixing $(a^0, \pi^+, \rho, \Bset, \M)$.
\end{proof}

\begin{remark}[No uniqueness claim]\label{rem:no-unique}
Theorem~\ref{thm:tc-tolls} does not claim that $\{c_t\}$ is the unique
actuarial decomposition of $\rho_0(L)$ satisfying any prescribed axiom set;
competing decompositions (e.g.\ Shapley-style or martingale-based) are
admissible. The toll is a \emph{well-defined}, time-consistent,
counterfactual decomposition once the five primitives are fixed.
\end{remark}

\section{No-Splitting within an Underwriting Boundary}\label{sec:no-splitting}

Let $\Phi_B: \R^d_+ \to \R_+$ be a monotone toll potential with
$\Phi_B(0) = 0$. Define the boundary-level toll
\begin{equation}\label{eq:potential-toll}
\lambda_t\bigl(a_t \mid E^B_t, B\bigr)
\;=\;
\Phi_B\bigl(E^B_t + \Delta E^B_t(a_t)\bigr) - \Phi_B\bigl(E^B_t\bigr).
\end{equation}
The boundary toll is the operational realisation of $c_t$ when the
conditional loss law given $\F_t$ depends on within-$B$ history only
through $E^B_t$; the calibration of $\Phi_B$ to $\rho_t$ is treated in
Paper~B.

\begin{assumption}[Boundary-determined conditional loss]\label{ass:Bdet}
Inside boundary $B$, the conditional law of $L$ given $\F_t$ depends on
the within-$B$ action history only through the cumulative exposure
$E^B_t$ and an outside state variable $\xi_t$:
$\Pbb(L \in \cdot \mid \F_t) = \Pbb(L \in \cdot \mid E^B_t, \xi_t)$.
The variable $\xi_t$ is $\F_t$-measurable, encodes state outside~$B$ that
is relevant to the priced conditional law, and is \emph{not reset by
splitting an action into smaller increments, opening new sessions,
re-labelling equivalent actions, or routing through proxy agents}. The
non-reset property is part of the contract definition of $\xi_t$; it
fails by default for naive boundary designs and is the principal
obligation of boundary calibration.
\end{assumption}

\begin{theorem}[No-splitting within an underwriting boundary]\label{thm:no-split}
Fix a boundary $B$ satisfying Assumptions~\ref{ass:boundary-obs}--\ref{ass:Bdet}.
For any finite sequence of within-$B$ side-effect actions $(a_0, \ldots,
a_{T-1})$ with exposure dynamics $E^B_{t+1} = E^B_t + \Delta E^B_t(a_t)$,
the total boundary toll generated by \eqref{eq:potential-toll} satisfies
\[
\sum_{t=0}^{T-1} \lambda_t(a_t \mid E^B_t, B)
\;=\;
\Phi_B(E^B_T) - \Phi_B(E^B_0).
\]
In particular, any decomposition of the same cumulative exposure increment
$E^B_T - E^B_0$ into smaller actions within the same boundary pays the
same total toll. If $\Phi_B$ is convex and nondecreasing, concentrating
exposure into higher cumulative states cannot reduce total toll relative
to the terminal cumulative-exposure charge.
\end{theorem}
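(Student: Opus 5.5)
The argument is a telescoping sum, followed by a short argument that the total depends only on the endpoints. First I substitute the exposure dynamics $E^B_{t+1} = E^B_t + \Delta E^B_t(a_t)$ into the definition \eqref{eq:potential-toll}. This rewrites each summand as $\lambda_t(a_t \mid E^B_t, B) = \Phi_B(E^B_{t+1}) - \Phi_B(E^B_t)$. Assumption~\ref{ass:boundary-obs} is what makes this legitimate: the state $E^B_t$ used to price $a_t$ must be the true cumulative state, already updated by every earlier action in $B$. Without that update, a split action could be priced twice against the same stale $E^B_t$ and the identity would fail. Summing over $t=0,\ldots,T-1$, consecutive terms cancel and the total is $\Phi_B(E^B_T) - \Phi_B(E^B_0)$. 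Finiteness of the sequence and the fact that $\Phi_B$ is real-valued on $\R^d_+$ remove any convergence or well-definedness issue.

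For the ``in particular'' clause, take two action sequences inside $B$, possibly of different lengths $T$ and $T'$. Suppose they start from the same $E^B_0$ and have the same total increment $\sum_t \Delta E^B_t = E^B_T - E^B_0$. Then they reach the same terminal state, and by the telescoping identity they pay the same total toll, whatever the intermediate path. Here Assumption~\ref{ass:Bdet} supplies the interpretation: $E^B_t$ is the only within-$B$ channel through which history enters the priced law, and $\xi_t$ is not reset by splitting. Hence the finer sequence cannot reach a different pricing state through relabelling, new sessions or proxy routing, and $\Phi_B$ evaluated along the exposure path is the full boundary charge. I would state explicitly that the identity itself is purely algebraic. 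The assumption matters only for treating the boundary toll as the operational realisation of $c_t$, so that the no-splitting conclusion transfers to the economic toll.

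For the convexity clause, I would state precisely what is being claimed, since ``cannot reduce total toll'' is informal. The first fact: because each $\Delta E^B_t \in \R^d_+$, the path is coordinatewise nondecreasing, so monotonicity of $\Phi_B$ makes every $\lambda_t \ge 0$. The total toll is therefore at least the toll accrued up to any intermediate time, and it equals the terminal charge $\Phi_B(E^B_T)-\Phi_B(E^B_0)$ exactly. The second fact uses convexity: for a fixed increment $\Delta$, the marginal toll $\Phi_B(E+\Delta)-\Phi_B(E)$ is nondecreasing in $E$ along the direction $\Delta$. I would prove this via the standard slope inequality for convex functions restricted to the line $E + s\Delta$, extended to comparable states $E \le E'$ using the ordering. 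Consequently an action executed at a higher cumulative state pays at least as much as the same action at a lower state. Moving exposure later or into more concentrated states cannot lower per-action charges, and the total is pinned to the terminal charge.

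The main obstacle is this last step. In $d>1$ dimensions, convexity plus coordinatewise monotonicity gives increasing increments only when $E' - E$ is parallel to $\Delta$; for general $E \le E'$ it would need a supermodularity-type condition on $\Phi_B$. So I would first try to restrict the claim to the scalar case $d=1$, or to increments along a common ray. If that does not suffice, I would add the remark that in general $d$ the conclusion holds under supermodular $\Phi_B$, and otherwise state only the telescoping equality, which is unconditional.
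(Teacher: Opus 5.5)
Your proposal is correct and matches the paper's proof: substituting the dynamics gives $\lambda_t = \Phi_B(E^B_{t+1}) - \Phi_B(E^B_t)$, the sum telescopes to a quantity depending only on the endpoints, and the convexity clause is read off as exact equality with the terminal charge, with convexity playing no role in the identity. Your $d>1$ caveat is also right, and sharper than the paper's side remark that convexity "becomes material" for making marginal tolls rise with cumulative exposure: for example, $\Phi_B(x)=\max_i x_i$ is convex and nondecreasing, yet the toll of the increment $e_1$ drops from $1$ at $E=0$ to $0$ at $E=2e_2$, so that reading needs increments along a common ray or a supermodularity condition.
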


\begin{proof}[Proof sketch]
Substitute the dynamics $E^B_{t+1} = E^B_t + \Delta E^B_t(a_t)$ into
\eqref{eq:potential-toll} to get
$\lambda_t = \Phi_B(E^B_{t+1}) - \Phi_B(E^B_t)$, then telescope:
$\sum_{t=0}^{T-1}[\Phi_B(E^B_{t+1}) - \Phi_B(E^B_t)] = \Phi_B(E^B_T) -
\Phi_B(E^B_0)$. The total depends on $(E^B_0, E^B_T)$ alone, hence is
invariant under within-$B$ decomposition. Convexity of $\Phi_B$ is not
required for the identity; it becomes material when the designer wants
the marginal toll to increase with cumulative exposure. If
Assumption~\ref{ass:Bdet} fails, $L$ admits a path-dependent component
beyond $E^B_T$ which $\Phi_B$ cannot capture, and adversarial sequencing
can strictly reduce realised toll within~$B$. Cross-boundary arbitrage is
not blocked by Theorem~\ref{thm:no-split} and must be addressed by
boundary design (Corollary~\ref{cor:gaming}).
\end{proof}

\begin{corollary}[Boundary design as gaming-resistance]\label{cor:gaming}
Theorem~\ref{thm:no-split} prevents splitting only within a fixed
observable boundary whose state $(E^B_t, \xi_t)$ is sufficient for the
relevant conditional loss law. Cross-session, cross-entity,
path-dependent, $\xi_t$-relabel, and proxy-agent arbitrage are not
automatically blocked; they must be handled by boundary design, including
common-control aggregation, rolling time windows, related-party rules,
action-category linking, currency and asset normalisation, and audit
rules that prevent the operator from resetting $\xi_t$ by relabelling
economically equivalent actions. If Assumption~\ref{ass:Bdet} fails
inside~$B$, gaming-resistance reduces to redesigning $B$ so the
assumption holds for the contractually insured loss class.
\end{corollary}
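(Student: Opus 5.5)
The plan is to read Corollary~\ref{cor:gaming} as two claims and prove each separately. The positive claim is that within a fixed boundary whose state $(E^B_t,\xi_t)$ is sufficient, splitting cannot reduce total toll. The negative claim is that outside that setting, arbitrage is not automatically blocked and must be closed by boundary design.

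\textbf{Positive part.} This is inherited directly from Theorem~\ref{thm:no-split}.

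1. Under Assumptions~\ref{ass:boundary-obs}--\ref{ass:Bdet}, the telescoping identity gives total toll $\Phi_B(E^B_T)-\Phi_B(E^B_0)$.
2. This value depends only on the endpoints of the exposure path, so any within-$B$ re-sequencing or splitting yields the same total.
3. Assumption~\ref{ass:Bdet} is what licenses pricing through $E^B_t$ at all. If the conditional loss law depended on within-$B$ history beyond $(E^B_t,\xi_t)$, then $\Phi_B$ would not represent $c_t$, and the identity would say nothing about economic risk.

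\textbf{Negative part.} For each listed channel I will exhibit a simple counterexample in which the boundary toll is strictly reduced while the insured loss law is unchanged.

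For cross-session and cross-entity arbitrage, take two boundaries $B_1,B_2$ with the same strictly convex $\Phi$, $\Phi(0)=0$. Route exposure $x$ as $x/2$ into each boundary. The charge is then $2\Phi(x/2)<\Phi(x)$ by strict convexity and $\Phi(0)=0$. Meanwhile the true loss, under common control, depends on the aggregate $x$.

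For $\xi_t$-relabelling, let $\Phi_B$ be indexed by $\xi_t$, so that $\Phi_B(\,\cdot\,;\xi)$ varies with $\xi$. An operator who can reset $\xi_t$ to a lower-charge value pays less. This violates the non-reset clause of Assumption~\ref{ass:Bdet}.

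For path dependence, take a loss whose law depends on the order of increments, for example a threshold breached only when a single action is large. Splitting then changes the realised risk while the $\Phi_B$ total does not change. This exhibits a mismatch between toll and risk in that direction. Where the dependence runs the other way, the split path can carry more risk than $\Phi_B(E^B_T)$ reflects, so the mismatch appears as underpricing.

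\textbf{Design remedies.} I will argue that each listed mechanism restores the hypotheses of Theorem~\ref{thm:no-split}:

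1. Common-control aggregation and related-party rules merge $B_1,B_2$ into one boundary whose exposure is the sum.
2. Rolling windows and action-category linking ensure that sessions and equivalent labels map to the same $E^B$.
3. Currency normalisation makes $\Delta E^B$ commensurable.
4. Audit rules enforce the non-reset property of $\xi_t$.

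Under the redesigned boundary, Theorem~\ref{thm:no-split} applies again. This yields the final sentence of the corollary: gaming-resistance reduces to choosing $B$ so that Assumption~\ref{ass:Bdet} holds.

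\textbf{Main obstacle.} The hard step is the negative direction, because ``not automatically blocked'' is informal. I will formalise it as an existence claim: for each channel there is a boundary design and a loss model satisfying Assumption~\ref{ass:Bdet} within each individual boundary under which the total charged toll is strictly below the merged-boundary toll. The strict-convexity splitting example supplies this for the aggregation channels. The relabelling and path-dependence cases I will handle by explicitly violating the relevant clause of Assumption~\ref{ass:Bdet}.
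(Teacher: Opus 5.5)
Your positive part and your cross-boundary witness are correct, and your route is more constructive than the paper's. The paper gives no separate proof of Corollary~\ref{cor:gaming}. Its support is the closing lines of the sketch of Theorem~\ref{thm:no-split}: the telescoped total depends on $(E^B_0,E^B_T)$ alone; failure of Assumption~\ref{ass:Bdet} leaves a path-dependent component that $\Phi_B$ cannot capture; and cross-boundary arbitrage simply lies outside the theorem. You instead turn ``not automatically blocked'' into an existence claim with explicit witnesses. The bound $2\Phi(x/2)<\Phi(x)$, for strictly convex $\Phi$ with $\Phi(0)=0$ and $x\neq 0$, genuinely proves the aggregation channels. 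Proxy-agent routing, which you do not treat separately, folds into the same witness. The bound also shows that the convexity that makes marginal tolls rise within $B$ is exactly what rewards dispersing exposure across boundaries. Your observation that the telescoping identity is purely algebraic, with Assumption~\ref{ass:Bdet} supplying only its economic meaning, is more precise than the paper's phrase that sequencing ``can strictly reduce realised toll''.

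Where the proposal goes wrong is the design-remedies step. In your cross-boundary witness, Assumption~\ref{ass:Bdet} holds in each of $B_1,B_2$, with $\xi_t$ carrying the other boundary's exposure. So the hypotheses of Theorem~\ref{thm:no-split} hold in each boundary. The underpricing comes from $\Phi_B$ depending on $E^B_t$ alone while the priced law depends on $\xi_t$. Merging therefore does not ``restore the hypotheses of Theorem~\ref{thm:no-split}'', because they never failed. Read as a sufficiency claim, your closing statement that gaming-resistance reduces to choosing $B$ so that Assumption~\ref{ass:Bdet} holds is refuted by your own witness. The corollary's last sentence is scoped to failure of the assumption \emph{inside} $B$. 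For the aggregation channels, the operative redesign is coarsening $B$ so that the loss-relevant state enters $E^B_t$ instead of $\xi_t$. You should also state the remedies conditionally: whether a rolling window or category link achieves this is model-dependent, and the corollary only lists them as tools.

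Two smaller repairs. First, a $\xi$-indexed potential lies outside the paper's $\Phi_B:\R^d_+\to\R_+$. With $\lambda_t=\Phi_B(E^B_{t+1};\xi_t)-\Phi_B(E^B_t;\xi_t)$, even the telescoping breaks once $\xi_t$ moves. Either flag this as an extension, or model relabelling as exposure leaving $E^B_t$, which reduces to the aggregation witness. Second, your first path-dependence example yields overpricing, so you need an explicit underpricing witness. An example is a loss increasing in the number of executed actions. This violates Assumption~\ref{ass:Bdet} unless the count is added as a coordinate of $E^B_t$, which is itself an instance of the redesign in the corollary's final sentence.
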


\section{Irreversible Authority Premium}\label{sec:iap}

We compare two authority sets $\A^0_t(h_t) \subset \A^+_t(h_t)
= \A^0_t(h_t) \cup \{a^+\}$. We work under an ambiguity set $\M$ of
admissible environment models, with $\rho_t^M$ denoting the dynamic risk
under model $M \in \M$.

\begin{definition}[Uncompensated irreversible tail exposure]\label{def:irr}
The action $a^+$ creates uncompensated irreversible tail exposure relative
to safe default $a^0$ at history $h_t$ if there exist a model
$M^\star \in \M$, an event $G \in \F_T$ with $\Pbb^{M^\star}(G \mid h_t)
> 0$, and constants $\delta > 0$, $\eta < \delta$ such that:
\begin{enumerate}[label=(\roman*),leftmargin=1.5em,itemsep=2pt]
  \item \emph{Tail-loss gap.}
  $L^{\doop(a^+), \pi^+} \ge L^{\doop(a^0), \pi^+}$ a.s.\ under $M^\star$,
  and $L^{\doop(a^+), \pi^+} \ge L^{\doop(a^0), \pi^+} + \delta$ on $G$
  under $M^\star$.
  \item \emph{Irreversibility.} No admissible continuation policy can
  reduce the loss gap on $G$ by more than $\eta < \delta$.
  \item \emph{Strict $\rho$-monotonicity for this pair under $M^\star$.}
  A terminal-loss change that is weakly nonnegative a.s.\ and strictly
  positive on $G$ produces a strictly higher value of $\rho_t^{M^\star}$
  at the current history.
\end{enumerate}
The three conditions are logically independent: (i)~alone leaves room for
the gap to be erased by continuation, (i)+(ii) without (iii) leaves the
gap invisible to the risk mapping, and (i)+(iii) without (ii) leaves the
gap hedgeable.
\end{definition}

Define the action-level authority premium and the set-level robust capital:
\begin{align}
\Delta_t(a^+ \mid h_t)
&\;:=\;
\sup_{M \in \M}\bigl[\rho_t^M(L^{\doop(a^+), \pi^+})
- \rho_t^M(L^{\doop(a^0), \pi^+})\bigr]_+ ,
\label{eq:Delta} \\[2pt]
K_t(U \mid h_t)
&\;:=\;
\sup_{M \in \M}\sup_{a \in U}
\rho_t^M(L^{\doop(a), \pi^+}),
\quad U \subseteq \A_t(h_t).
\label{eq:K}
\end{align}

\begin{theorem}[Irreversible authority premium]\label{thm:iap}
Suppose $\A^+_t(h_t) = \A^0_t(h_t) \cup \{a^+\}$, the relevant robust
risks are finite (so that $K_t(\A^+_t \mid h_t) < \infty$ and the
supremum defining $\Delta_t(a^+ \mid h_t)$ is finite), and $a^+$ creates
uncompensated irreversible tail exposure relative to its safe default
$a^0$ in the sense of Definition~\ref{def:irr}. Then:
\begin{enumerate}[label=\textup{(\alph*)},leftmargin=1.8em,itemsep=2pt]
  \item \emph{Action-level premium is strictly positive:}
  $\Delta_t(a^+ \mid h_t) > 0$.
  \item \emph{Set-level capital increase is conditional:}
  $K_t(\A^+_t(h_t) \mid h_t) > K_t(\A^0_t(h_t) \mid h_t)$ if and only if
  \[
  \sup_{M \in \M} \rho_t^M\bigl(L^{\doop(a^+), \pi^+}\bigr)
  \;>\;
  K_t(\A^0_t(h_t) \mid h_t).
  \]
\end{enumerate}
\end{theorem}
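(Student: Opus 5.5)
Part (a) follows from the witness model in Definition~\ref{def:irr}; part (b) is an elementary property of suprema over a finite union. The whole argument is at the level of the definitions (\ref{eq:Delta}) and (\ref{eq:K}).

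\textbf{Part (a).} I restrict the supremum in (\ref{eq:Delta}) to the single witness $M^\star\in\M$ from Definition~\ref{def:irr}. Set $Z := L^{\doop(a^+),\pi^+} - L^{\doop(a^0),\pi^+}$ under $M^\star$. By condition~(i), $Z\ge 0$ a.s. and $Z\ge\delta>0$ on $G$, where $\Pbb^{M^\star}(G\mid h_t)>0$.

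Condition~(iii) is stated for exactly this kind of terminal-loss change: weakly nonnegative a.s.\ and strictly positive on $G$. Applying it to $Y := L^{\doop(a^0),\pi^+}$ and $Y+Z = L^{\doop(a^+),\pi^+}$ gives $\rho^{M^\star}_t(L^{\doop(a^+),\pi^+}) > \rho^{M^\star}_t(L^{\doop(a^0),\pi^+})$ at $h_t$.

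By the finiteness hypothesis both terms are finite, so their difference is a strictly positive real number. Its positive part is the same number, hence $\Delta_t(a^+\mid h_t)\ge[\cdots]_+>0$. Monotonicity from Theorem~\ref{thm:tc-tolls} already gives the weak inequality, and (iii) upgrades it to a strict one.

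Condition~(ii) is not needed for (a). I will remark that its role is to ensure $\pi^+$ cannot be re-chosen to erase the gap. Since $\pi^+$ is fixed in (\ref{eq:Delta}), (i) and (iii) suffice. The main care point is interpretive: (iii) is conditional on $h_t$, so the strict inequality is read at the current history, consistent with evaluating $\rho_t$ pointwise at $h_t$.

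\textbf{Part (b).} Write $s^+ := \sup_{M}\rho^M_t(L^{\doop(a^+),\pi^+})$. Because $\A^+_t = \A^0_t\cup\{a^+\}$ and iterated suprema commute, I get
\[
K_t(\A^+_t\mid h_t) = \sup_{M}\max\Bigl\{\sup_{a\in\A^0_t}\rho^M_t(L^{\doop(a),\pi^+}),\ \rho^M_t(L^{\doop(a^+),\pi^+})\Bigr\} = \max\{K_t(\A^0_t\mid h_t),\ s^+\},
\]
using that a supremum of a pointwise maximum equals the maximum of the suprema. Finiteness of $K_t(\A^+_t\mid h_t)$ makes both $K_t(\A^0_t\mid h_t)$ and $s^+$ finite, so no $\infty>\infty$ ambiguity arises.

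The equivalence is then immediate: $\max\{K^0,s^+\}>K^0$ if and only if $s^+>K^0$. For the forward direction, if $s^+\le K^0$ the maximum equals $K^0$. For the converse, $s^+>K^0$ forces the maximum to be $s^+>K^0$.

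I will close by noting that (a) does not imply the condition in (b). Positivity of $\Delta_t$ compares $a^+$ only with its own safe default, whereas (b) compares it with the riskiest action already in $\A^0_t$. This explains why (b) is an if-and-only-if characterisation rather than a consequence of (a).
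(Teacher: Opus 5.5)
Your proof is correct and follows the paper's argument: for (a) you use the witness $M^\star$ with conditions (i) and (iii) of Definition~\ref{def:irr}, and for (b) you use the identity $K_t(\A^+_t\mid h_t)=\max\{K_t(\A^0_t\mid h_t),\sup_{M}\rho^M_t(L^{\doop(a^+),\pi^+})\}$. You are also right that condition (ii) does no logical work in (a), since $\pi^+$ is fixed in \eqref{eq:Delta}; the paper invokes it only as an interpretive step that leaves the formal inequality unchanged.
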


\begin{proof}[Proof sketch]
\emph{Action-level (a).} Pick the witness $M^\star$ of
Definition~\ref{def:irr}. Set $X = L^{\doop(a^+), \pi^+}$ and
$Y = L^{\doop(a^0), \pi^+}$. Condition~(i) gives $X \ge Y$ a.s.\ under
$M^\star$ and $X \ge Y + \delta$ on $G$. Condition~(ii) ensures the gap on
$G$ cannot be erased by continuation policy, so even after admissible
modification of $\pi^+$ on the post-$t$ trajectory we retain
$X \ge Y + (\delta - \eta) > Y$ on $G$ with positive probability.
Condition~(iii) yields strict $\rho$-monotonicity for this pair:
$\rho_t^{M^\star}(X) > \rho_t^{M^\star}(Y)$. Therefore
$\Delta_t(a^+ \mid h_t) \ge
[\rho_t^{M^\star}(X) - \rho_t^{M^\star}(Y)]_+ > 0$.

\emph{Set-level (b).} Since $\A^+_t = \A^0_t \cup \{a^+\}$, for each
$M \in \M$ we have
$\sup_{a \in \A^+_t}\rho_t^M(L^{\doop(a), \pi^+}) =
\max\bigl(\sup_{a \in \A^0_t}\rho_t^M(L^{\doop(a), \pi^+}),
\rho_t^M(L^{\doop(a^+), \pi^+})\bigr)$.
Using $\sup_M\max(f, g) = \max(\sup_M f, \sup_M g)$,
\[
K_t(\A^+_t \mid h_t)
\;=\;
\max\!\Bigl(K_t(\A^0_t \mid h_t),\;
\sup_{M \in \M}\rho_t^M(L^{\doop(a^+), \pi^+})\Bigr).
\]
Strict greater than $K_t(\A^0_t \mid h_t)$ iff the second argument
strictly exceeds the first, proving (b) as an if-and-only-if.
\end{proof}

\begin{remark}[Why the split matters]\label{rem:iap-split}
Without distinguishing $\Delta_t$ from $K_t$, the naive claim ``adding an
irreversible action raises robust capital'' is false: if $\A^0_t$
already permits an action whose worst-case risk exceeds that of $a^+$,
then $K_t(\A^+_t) = K_t(\A^0_t)$. The defensible content is that
irreversible non-hedgeable authority has a strictly positive incremental
price relative to its own safe default, and additionally raises
contract-level capital only when it becomes binding. This decomposition
mirrors the distinction between marginal and total risk contributions in
classical capital allocation \citep{tasche2007euler}.
\end{remark}

\section{Runtime Risk Gating with Conservative Upper Bounds}\label{sec:gating}

Exact computation of $c_t(a \mid h_t)$ at every action is prohibitive in
deployment. Let $\bar c_t(a \mid h_t)$ be a conservative \emph{nonnegative}
upper envelope on the positive toll $c_t^+(a \mid h_t)$, provided by
simulation, a calibrated quantile model, or distributionally robust
optimisation.

\begin{assumption}[High-probability toll envelope]\label{ass:envelope}
Fix a confidence level $1 - \delta \in (0, 1)$. The envelope
$\bar c_t \ge 0$ on a set of adaptively evaluated query points
$\mathcal{I} \subseteq \{(t, a_t)\}$ satisfies
\[
\Pbb\!\left(\,
c_t^+(a_t \mid h_t) \le \bar c_t(a_t \mid h_t)
\text{ for all } (t, a_t) \in \mathcal{I}
\right)\ge 1 - \delta.
\]
For envelopes constructed by conformal prediction, the calibration scheme
must match the deployment adaptivity assumptions: split conformal
\citep{vovk2005algorithmic,angelopoulos2021gentle} is sufficient under
exchangeable or fixed evaluation, while \emph{online adaptive action
selection}---in which the policy itself depends on $\bar c$---generally
breaks exchangeability and requires an online or adaptive conformal
variant \citep{gibbs2021adaptive}.
\end{assumption}

\begin{theorem}[Runtime risk-budget guarantee]\label{thm:gating}
Let an agent begin with toll budget $B_0 \ge 0$. A runtime gate executes
action $a_t$ at time $t$ only if
\[
\bar c_t(a_t \mid h_t) \le B_t,
\]
in which case the budget is updated $B_{t+1} := B_t - \bar c_t(a_t \mid
h_t)$; otherwise the gate blocks, downgrades to safe default $a^0_t(h_t)$,
or escalates to human approval. Under Assumption~\ref{ass:envelope},
with probability at least $1 - \delta$,
\[
\sum_{(t, a_t) \in \mathcal{I}_{\mathrm{exec}}}
c_t^+(a_t \mid h_t)
\;\le\; B_0,
\]
where $\mathcal{I}_{\mathrm{exec}}$ is the set of gate-admitted executed
actions.
\end{theorem}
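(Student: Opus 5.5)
The argument is deterministic once we work on the good event supplied by Assumption~\ref{ass:envelope}. Let
\[
\Omega_\delta := \bigl\{ c_t^+(a_t \mid h_t) \le \bar c_t(a_t \mid h_t) \text{ for all } (t,a_t) \in \mathcal{I} \bigr\},
\]
so that $\Pbb(\Omega_\delta) \ge 1-\delta$. I will show that the claimed inequality holds pointwise on $\Omega_\delta$, which gives the probability bound immediately.

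\textbf{Step 1: the budget process.} First I record the budget dynamics. The budget changes only at executed steps, by $B_{t+1} = B_t - \bar c_t(a_t \mid h_t)$. At blocked, downgraded or escalated steps I take $B_{t+1} = B_t$; the statement leaves this implicit, so I will make the convention explicit. I would also note that a downgrade to the safe default $a^0_t(h_t)$ has counterfactual toll $c_t(a^0_t) = 0$ by \eqref{eq:toll}. It therefore contributes nothing to the positive-toll sum, whether or not one lists it in $\mathcal{I}_{\mathrm{exec}}$.

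\textbf{Step 2: nonnegativity by induction.} Next I show by induction that $B_t \ge 0$ for all $t$. The base case is $B_0 \ge 0$. The gate condition $\bar c_t \le B_t$ gives $B_{t+1} = B_t - \bar c_t \ge 0$ at executed steps, and the budget is unchanged otherwise.

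\textbf{Step 3: telescoping.} Telescoping over the finite horizon gives
\[
B_T = B_0 - \sum_{(t,a_t) \in \mathcal{I}_{\mathrm{exec}}} \bar c_t(a_t \mid h_t).
\]
Combined with $B_T \ge 0$, this yields $\sum_{\mathcal{I}_{\mathrm{exec}}} \bar c_t \le B_0$. This holds surely, not just with high probability.

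\textbf{Step 4: the coverage step.} On $\Omega_\delta$, each executed action satisfies $c_t^+ \le \bar c_t$. Summing these termwise inequalities gives the claim:
\[
\sum_{\mathcal{I}_{\mathrm{exec}}} c_t^+ \;\le\; \sum_{\mathcal{I}_{\mathrm{exec}}} \bar c_t \;\le\; B_0.
\]

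\textbf{Main obstacle.} The only delicate point is the inclusion $\mathcal{I}_{\mathrm{exec}} \subseteq \mathcal{I}$. Executed actions are selected adaptively using $\bar c$ itself, so the uniform coverage must hold over the adaptively evaluated query set. Per-point marginal coverage would not be enough.

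I would handle this by observing that every executed action was first evaluated by the gate, hence belongs to $\mathcal{I}$. Assumption~\ref{ass:envelope} is stated as a simultaneous guarantee over $\mathcal{I}$ under whatever adaptivity is present, and that is what licenses the termwise bound. Consequently no union bound or martingale argument is needed inside the proof. The adaptivity burden sits entirely in the assumption, which connects to the remark on adaptive conformal calibration.
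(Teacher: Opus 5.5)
Your proposal is correct and follows essentially the same route as the paper: restrict to the simultaneous-coverage event from Assumption~\ref{ass:envelope}, use the gate condition to keep $B_t \ge 0$ and telescope to $\sum_{\mathcal{I}_{\mathrm{exec}}} \bar c_t \le B_0$ (which, as you note, holds surely), then use $\mathcal{I}_{\mathrm{exec}} \subseteq \mathcal{I}$ to pass to $c_t^+$. Your aside that downgrades carry zero toll relies on the safe default of $a^0_t(h_t)$ being $a^0_t(h_t)$ itself, but the theorem does not need it, since $\mathcal{I}_{\mathrm{exec}}$ contains only gate-admitted actions.
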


\begin{proof}[Proof sketch]
Let $\mathcal{E}_\delta = \{c_t^+(a_t \mid h_t) \le \bar c_t(a_t \mid h_t)
\text{ for all } (t, a_t) \in \mathcal{I}\}$; by
Assumption~\ref{ass:envelope}, $\Pbb(\mathcal{E}_\delta) \ge 1 - \delta$.
Work on $\mathcal{E}_\delta$. The gate executes only when $\bar c_t \le
B_t$ and updates $B_{t+1} = B_t - \bar c_t$. Since $\bar c_t \ge 0$, the
budget recursion is monotone-nonincreasing, $B_t \ge 0$ throughout, and
summing across executed actions gives
\[
\sum_{(t, a_t) \in \mathcal{I}_{\mathrm{exec}}}
\bar c_t(a_t \mid h_t)
\;=\; B_0 - B_{\mathrm{final}}
\;\le\; B_0.
\]
On $\mathcal{E}_\delta$, $c_t^+ \le \bar c_t$ on every evaluated point,
hence on every executed point, so $\sum c_t^+ \le \sum \bar c_t \le B_0$.
All statistical difficulty is isolated in the validity of
Assumption~\ref{ass:envelope} under the deployment adaptivity regime.
\end{proof}

\begin{remark}[Why this paper does not claim policy dominance]\label{rem:no-fsd}
A tempting stronger claim is that adding tolls makes the agent's
terminal-loss distribution first-order stochastically dominate the
untolled baseline. That claim is \emph{false} in a general MDP for at
least three reasons: (a) the reward may be perversely correlated with
tail loss, so risk-toll-induced changes to the policy can shift mass to
worse outcomes elsewhere; (b) the action space may lack any lower-risk
substitute for the priced action under the relevant history, so the
penalised policy collapses to refusal rather than substitution; (c)
local one-step risk-toll penalties need not imply global distributional
dominance even when monotone in expectation. The defensible structural
statement of this framework is the budget guarantee
(Theorem~\ref{thm:gating}). Any policy-improvement result should be
stated either as an empirical Paper~B outcome or under explicit
reward--loss alignment and action-richness assumptions, following the
risk-sensitive MDP literature \citep{chow2014algorithms,tamar2015policy}.
\end{remark}

\section{Companion Layers and Program Interfaces}\label{sec:program}

The preceding sections define the foundational contract objects: the
side-effect-bearing action, the minimal-authority safe default, the
underwriting boundary, the counterfactual toll, the boundary potential, the
irreversible-authority premium, and the conservative gate. The rest of the
research program keeps those primitives fixed and asks three different
questions.

\begin{table}[ht]
\centering
\small
\begin{tabular}{p{0.24\linewidth}p{0.68\linewidth}}
\toprule
Layer & Role in the program \\
\midrule
Empirical runtime &
Implements the toll gate as an Actuarial Action Interface and evaluates
whether operational authority is released at bounded actuarial capital
across database, refund, and tool-agent replay environments. \\
Mechanism design &
Asks whether strategic operators can reduce payable tolls by splitting,
misreporting, or exploiting interface failures; strengthens the
boundary-design corollary into aggregate-settlement, ambiguity-reserve,
and premium-aware incentive-compatibility clauses. \\
Dynamic underwriting &
Asks how premiums and reserves should update after observed experience;
turns the conservative envelope and audit telemetry into an
experience-rated contract with credibility learning and audit-replay
calibration. \\
\bottomrule
\end{tabular}
\caption{Program interfaces. Paper~A supplies the mathematical accounting
layer; companion papers instantiate, strategically harden, and dynamise the
same contract.}
\label{tab:program}
\end{table}

This division prevents the foundational claims from becoming overbroad.
Paper~A does not claim that tolls make every policy safer, that a finite
benchmark exhausts agent risk, or that strategic manipulation is impossible
without additional contract clauses. It claims the narrower accounting
facts proved above: the toll is well-defined once primitives are fixed,
within-boundary splitting telescopes through a boundary potential,
irreversible authority has positive marginal price when the stated witness
conditions hold, and a gate that charges a valid conservative envelope
cannot spend more positive toll than its initial budget on the covered
execution path.

\paragraph{Empirical status.}
The empirical companion \citep{chen2026insuring} has moved beyond the
original deterministic database prototype. It instantiates the same safe-default and budget
recursion through a deterministic runtime interface, paired proposal-replay
experiments, Postgres/LangChain stress tests, four-environment
authority-frontier summaries, and a live LLM underwriting panel. Those
results are intentionally not re-reported here; their role is to test the
operational relevance of the mathematical objects introduced in this
paper.

\section{Residual Obligations and Scope}\label{sec:obligations}

The proof sketches above isolate the assumptions under which the four
foundational results hold. The following obligations remain outside the
scope of this paper, or are handled by companion layers rather than by
changing the present theorems:
\begin{enumerate}[leftmargin=1.5em,itemsep=2pt]
  \item \textbf{Model-class verification.} Strict
  $\rho$-monotonicity for the witness pair under $M^\star$
  (Definition~\ref{def:irr}(iii)) must be verified for any chosen runtime
  risk family. The entropic family and conditional $\ES_\alpha$ require
  separate verification of their strictness regimes.
  \item \textbf{Closed-loop calibration.} Assumption~\ref{ass:envelope}
  is a high-probability envelope assumption. Split conformal methods
  supply it under exchangeability; closed-loop deployment requires an
  online-valid or audit-replay calibration protocol
  \citep{gibbs2021adaptive}. The dynamic-underwriting companion develops
  one audit-replay route, but this paper does not prove a general
  adversarial conformal theorem.
  \item \textbf{Boundary failure and cross-boundary settlement.}
  Theorem~\ref{thm:no-split} is strictly within-boundary. Cross-session,
  related-party, proxy-agent, and multi-boundary arbitrage require
  explicit aggregate settlement clauses. These are mechanism-design
  questions built on top of Corollary~\ref{cor:gaming}, not hidden
  consequences of Theorem~\ref{thm:no-split}.
  \item \textbf{Policy welfare.} Theorem~\ref{thm:gating} is a reserve
  accounting guarantee, not a welfare theorem and not a first-order
  stochastic dominance statement. Any claim that tolls improve outcomes
  requires empirical evidence or additional reward--loss alignment and
  action-richness assumptions.
\end{enumerate}

\begingroup
\raggedright
\bibliographystyle{plainnat}
\bibliography{references}
\endgroup

\end{document}